\documentclass[letter, 10pt, conference]{IEEEtran} 

\usepackage{lmodern}
\usepackage{amsmath,amsthm}
\usepackage{amssymb}

\usepackage{mathtools}

\usepackage{enumitem}
\setenumerate{nolistsep}
\setitemize{nolistsep}


   


\allowdisplaybreaks
\setlength{\parskip}{1ex}

\usepackage{defs}
\DeclareMathOperator*{\ec}{ec}

\DeclareMathOperator{\diag}{diag}
\newcommand{\VA}{V_{A}}
\newcommand{\VB}{V_{B}}
\newcommand{\EA}{E_A}
\newcommand{\EB}{E_B}
\newcommand{\GA}{G(A)}
\newcommand{\HA}{H(A)}
\newcommand{\Ze}{\mathcal{Z}}
\newcommand{\F}{\mathcal{F}}

\newcommand{\NS}{N^{-}}
\newcommand{\rt}{R}
\newcommand{\Xe}{X_e}
\newcommand{\Se}{S_e}

\newcommand{\rtn}{\tilde{R}}

\newcommand{\de}{d^{-}}
\newcommand{\dep}{d^{+}}
\newcommand{\I}{\mathcal{I}}
\newcommand{\GAB}{G(A, B)}
\newcommand{\bil}{V_{A}^{1}}
\newcommand{\bir}{V_{A}^{2}}
\newcommand{\match}{M}
\newcommand{\gra}{G}

\newcommand{\sys}{A}
\newcommand{\inp}{B}

\title{Resilience of Complex Networks}

\author{Priyanka Dey, Niranjan Balachandran, and Debasish Chatterjee%
	\thanks{PD and DC are with Systems \& Control Engineering, and NB is with the Department of Mathematics, IIT Bombay, Powai, Mumbai 400076, India. Emails: \textsf{dey\_priyanka@sc.iitb.ac.in, niranj@math.iitb.ac.in, dchatter@iitb.ac.in}}%
}



\begin{document}

    \maketitle

	\begin{abstract}
	This article determines and characterizes the minimal number of actuators needed to ensure structural controllability of a linear system under structural alterations that can severe the connection between any two states. We assume that initially the system is structurally controllable with respect to a given set of controls, and propose an efficient system-synthesis mechanism to find the minimal number of additional actuators required for resilience of the system w.r.t such structural changes. The effectiveness of this approach is demonstrated by using standard IEEE power networks.
	
	\end{abstract}

\section{Introduction}
\label{s:intro}
Consider a controlled linear system
\begin{equation}
	\label{e:linsys}
	\dot{\st}(t)= \sys \st(t) + \inp\cont(t),
\end{equation}
where \(\st(t) \in \R^{d}\) are the states and \(\cont(t) \in \R^{m}\) are the control actions at time \(t\), and \(\sys \in \R^{d \times d}\) and \(\inp \in \R^{d \times m}\) are the given state and control matrices, respectively. We assume that \eqref{e:linsys} is controllable,\footnote{For us controllability of \eqref{e:linsys} is equivalent to the rank of the matrix \(\pmat{B & AB & \cdots & A^{d-1} B}\) being \(d\).} and let it under structural alterations. These alterations may be of different kinds, e.g., 
\begin{enumerate}[label=(\roman*), leftmargin=*, widest=iii, align=right]
	\item \label{alter:states} connections between certain states may be severed,
	\item \label{alter:inputs-to-states} connections between some of the control inputs and some of the states may be severed, or
	\item \label{alter:inputs} some of the control inputs may become dysfunctional.
\end{enumerate}
Structural changes of the type \ref{alter:states} reflect in the system \eqref{e:linsys} in the form of certain non-zero entries of \(A\) being set to \(0\), \ref{alter:inputs-to-states} reflect as certain elements of the control matrix \(B\) being set to \(0\), and \ref{alter:inputs} reflect as certain columns of \(B\) being set to \(0\).

The structural alterations \ref{alter:states}-\ref{alter:inputs-to-states}-\ref{alter:inputs} above may be consequences of natural causes such as ageing/malfunctioning of system components \cite{ref:AtuSah-09}, or due to malicious external attacks that are designed to adversely affect normal operations of the systems \cite{ref:PasDorBul-13}, \cite{ref:PanMis-16}, \cite{ref:MoKimBra-12}, \cite{ref:CarAminSas-08}. 

Against the backdrop of the possibility of \eqref{e:linsys} undergoing structural changes of the types \ref{alter:states}-\ref{alter:inputs-to-states}-\ref{alter:inputs}, it is natural to ask whether certain fundamental system-theoretic properties of the altered system are preserved. For instance: What is the structure of \(B\) such that if a certain number of actuators in \eqref{e:linsys} fail, then the resulting system is still controllable? Moving one step further, is it possible to identify conditions on \(A\) together with a class of structural change of type \ref{alter:states} such that the altered system
\begin{equation}
	\label{e:perturbed linsys}
	\dot{\st}(t)=\sys' \st(t) + \inp\cont(t),
\end{equation}
is still controllable? Such questions, in general, turn out to be difficult from a complexity-theoretic standpoint \cite{ref:LiuPeqSinKar-16}, typically requiring a large number of computations (depending on the size of the system) to be performed to arrive at the answers.

In this article we study the simplest of such problems:
\begin{equation}
	\label{e:key problem}
	\left\{
	\begin{aligned}
		& \text{Given \(A\), identify a suitably ``minimal'' \(B\) such that}\\
		& \text{\eqref{e:linsys} is controllable even after the connection}\\
		& \text{between any two system states is severed}.
	\end{aligned}
	\right.
	\tag{\(\mathcal P\)}
\end{equation}
The problem \eqref{e:key problem} may look deceptively simple, but it is a computationally difficult problem: indeed, it can be recast in terms of a well-known hard combinatorial problem --- see Remarks \ref{r:2.1}, \ref{r:2.2}, and \ref{r:depend} for details.

We will resort to structural systems theory to solve \eqref{e:key problem}. Structural systems theory deals with system-theoretic properties that depend on the sparsity pattern of the interconnections between the system states and control inputs. More precisely, the locations of zeroes in the system and control matrices of \eqref{e:linsys} provide crucial information about controllability and other system-theoretic properties. This approach turns out to be very useful in the context of \eqref{e:key problem} since it is naturally fine-tuned to observing whether the entries of \(\sys\) and \(\inp\) are zeros.

The literature on structural system theory is comprehensive: The key concepts have been explored in several articles, e.g., \cite{ref:LiuSloBar-11, ref:PeqKarAgu-13,ref:PeqKarAgu-131, ref:PeqKarAgu-15, ref:Ols-15, ref:PeqKarAgu-16, ref:PeqKarAgui-16}. Over the past three decades, several verification results have been proposed for \eqref{e:key problem}; e.g., in \cite{ref:Rec-90} the authors explored the impact of directed link failures on structural controllability of the system, \cite{ref:JruDru-13} analysed the connection between controllability and standard network parameters such as topological transitivity and degree, etc.
Our problem \eqref{e:key problem} is closely related to the problems treated in \cite{ref:JafAjoAgh-11} and \cite{ref:LiuMoPeqSinKar-13}, our work differs from others in the sense that we do not restrict to a special class of systems but are interested in solving \eqref{e:key problem} a general class of systems by using an efficient system-synthesis mechanism; see Remark \eqref{r:2.3} for a detailed discussion. To the best of our knowledge, our approach to solve \eqref{e:key problem} is novel, and the advantage of our system-synthesis mechanism lies in that it can be effectively adapted to solve other similar problems. We provide illustrations of our approach on standard benchmark IEEE power networks to establish its effectiveness.

The rest of the article is organised as follows: \S\ref{s:background} reviews certain concepts and results from discrete mathematics that will be used in this article. The precise problem statement and our main results are presented in \S\ref{s:main results}. An illustrative example is presented in \S\ref{s:example}. We conclude with a summary of this article and a set of future directions in \S\ref{s:conclusion}.

\section{Background}
\label{s:background}
Structural system theory starts with the representation of \eqref{e:linsys} as a directed graph \(\GAB\): Let  \(\VA=\lbrace v_1, v_2, \ldots,  v_d\rbrace\), and \(\VB=\lbrace r_1, r_2, \ldots, r_m\rbrace\) be the state vertices and control/input vertices corresponding to the states \(\st(t)\in \R^d\) and the control \(\cont(t) \in \R^m\) of the system \eqref{e:linsys}. Similarly, let \( \EA=\lbrace (v_j, v_i)| a_{ij}\neq 0\rbrace\) and \(\EB=\lbrace (r_j, u_i)| b_{ij}\neq 0\rbrace\) where \(a_{ij}\) and \(b_{ij}\) are the elements of the matrix \(\sys\) and \(\inp\). The directed graph \(\GAB\) is represented as \((V, E)\), where \(V=\VA \sqcup \VB\) and  \(E= \EA \sqcup \EB\) where \(\sqcup\) represents the disjoint union. In \(\GAB\), \(\EA\) symbolizes the set of edges between the state vertices, and  \(\EB\)  symbolizes the set of edges from the control vertices to the state vertices. In the similar manner, we can define \(\GA= (\VA, \EA)\) as a directed graph which considers the state vertices and the edges in between them.

A directed graph \(G_{s} = (V_{s}, E_{s})\) with \(V_{s} \subset \VA\) and \(E_{s} \subset \EA\) is called a \emph{subgraph} of \(\GA\). When \(U \subset V(\GA)\), the \emph{induced subgraph} \(G[U]\) consists of \(U\) and all the edges whose endpoints are contained in \(U\). A sequence of edges \(\lbrace(v_1, v_2), (v_2, v_3), \ldots, (v_{k-1}, v_k)\rbrace\), where each \((v_i, v_j)\in \EA\) with all the vertices distinct, is called a \emph{directed path} from \(v_1\) to \(v_k\). The directed graph \(G(A)\) is said to be \emph{strongly connected} if there exists a directed path between every pair of vertices in it.  A \emph{strongly connected component (SCC)} of \(\GA\) is a maximal subgraph such that for every \(v\), \(w\) \(\in\) \(\VA\) in the subgraph there exist a directed path from \(v\) to \(w\) and a directed path from \(w\) to \(v\). A \emph{Directed acyclic graph} (DAG) of \(\GA\) represents each SCC as a node and a directed edge exists between two nodes iff there exists a directed edge connecting the correponding SCCs in \(\GA\). The DAG associated with \(\GA\) can be efficiently generated in \(\mathcal{O}(\abs{\VA}+\abs{\EA})\).   For a set \(X \subset \VA\), the set of edges of \(\GA\) entering \(X\) is termed as \emph{incut} of \(X\). Similarly the set of edges of \(\GA\) leaving \(X\) is called \emph{outcut} of \(X\). The size of the incut and outcut associated with \(X\) are denoted by \(d^{+}(X)\) and \(d^{-}(X)\), and is termed as \emph{in-degree} and \emph{out-degree} of \(X\) denoted by \(d^{+}(X)\) and \(d^{-}(X)\). \(\Delta^{+}(\GA)\) and \(\Delta^{-}(\GA)\) represents the maximum out-degree and in-degree of \(\GA\). Let \(v \in \VA\), the in-degree \(d^{-}(v)\) is the number of edges terminating at \(v\). The out-degree \(d^{+}(v)\) is the number of edges leaving \(v\). 

 The directed graph \(\GA\) can be represented by a undirected bipartite graph in the following standard fashion: \(\HA\Let(\bil \sqcup \bir, \Gamma)\) where \(\bil\Let{\lbrace v_1^1, v_2^1, \ldots, v_d^1 \rbrace} \), \(\bir\Let{\lbrace v_1^2, v_2^2, \ldots, v_d^2 \rbrace} \), and \( \Gamma = \lbrace (v_j^1, v_i^2)| a_{ij}\neq 0\rbrace \) as portrayed in Fig. 1. 
 \tikzset{middlearrow/.style={
        decoration={markings,
            mark= at position 0.5 with {\arrow{#1}} ,
        },
        postaction={decorate}
    }
}
\begin{figure}[th]
\label{fig:2.1}
\begin{center}
\begin{tikzpicture}
\node[shape=circle,draw=black] (1) at (0,0) {$v_1$};
    \node[shape=circle,draw=black] (2) at (-1.5,-2) {$v_2$};
    \node[shape=circle,draw=black,] (3) at (1.5,-2) {$v_3$};
     \path [->,solid,draw=blue](1) edge node[right] {} (2); 
     \path [->,solid,draw=blue](3) edge node[right] {} (1); 
     \path [->,solid,draw=blue](2) edge node[right] {} (3); 
   \end{tikzpicture}
\qquad 
\begin{tikzpicture}
\node[shape=circle,draw=black] (1) at (5,0) {$v_1^1$};
    \node[shape=circle,draw=black] (2) at (5,-1) {$v_2^1$};
    \node[shape=circle,draw=black,] (3) at (5,-2) {$v_3^1$};
    \node[shape=circle,draw=black] (1') at (7,0) {$v_1^2$};
    \node[shape=circle,draw=black] (2') at (7,-1) {$v_2^2$};
    \node[shape=circle,draw=black] (3') at (7,-2) {$v_3^2$};
    \draw (1) edge[-,draw=blue] (2') ;
    \draw (2) edge[-,draw=blue] (3') ;
      \draw (3) edge[-,draw=blue] (1') ;
\end{tikzpicture}
\end{center}
\caption{A digraph and its bipartite representation.}
\end{figure}
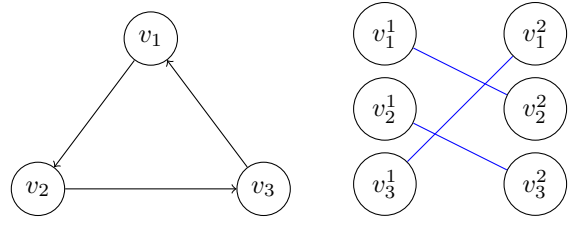

A \emph{matching} \(\match\) in \(\HA\) is a subset of edges in \(\Gamma\) that do not share vertices. A \emph{maximum matching} \(\match\) in \(\HA\) is defined as a matching \(\match\) that has largest number of edges among all possible matchings. A vertex is said to be \emph{matched} if it belongs to an edge in the matching \(\match\); otherwise, it is \emph{unmatched.} A matching \(\match\) in \(\HA\) is said to be \emph{perfect} if all the vertices in \(\HA\) are matched. A maximum matching \(\match\) can be found in \(\HA\) in polynomial time \(\mathcal{O}(\sqrt{\abs{\bil \sqcup \bir}}\abs{\Gamma})\). Note that a maximum matching \(\match\) may not be unique.
\begin{remark}
\label{r:2.1}
The problem of finding the family of all maximum matchings in a digraph is a sharp P-complete problem  \cite{ref:LovPlu-86}, which is extremely difficult to solve.
 \end{remark}
 \begin{remark}
 \label{r:2.2}
 We note that \eqref{e:key problem} can be transformed into a set-cover problem \cite{ref:LiuMoPeqSinKar-13,ref:LiuPeqSinKar-16}, which is known to be NP-hard . This clearly depicts the complexity associated with solving \eqref{e:key problem}.
\end{remark}
\begin{remark}
\label{r:2.3} The authors of \cite{ref:JafAjoAgh-11} approached the problem of establishing structural controllability under failures by assuming \(\sys\) to be a matrix with all the diagonal elements non-zero. Real-world networks, e.g., power networks do not satisfy this condition. The authors of \cite{ref:LiuMoPeqSinKar-13} considered the problem of structural observability when the system matrix \(\sys\) is irreducible and symmetric. This is a very restrictive class of systems for which analysing \eqref{e:key problem} is not difficult.  We do not restrict ourselves to such limiting assumptions. 
\end{remark}

For the directed graph \(\GAB\) corresponding to the system \eqref{e:linsys}
\begin{itemize}[leftmargin=*]
\item A vertex \(v\) is said to be \emph{accessible from the control vertices} if there exists a directed path terminating at \(v\) starting from atleast one of the control vertices; otherwise it is  said to be \emph{inaccessible from the control vertices}.
\item For a subset \(S \subset \VA\), the \emph{neighbourhood} of \(S\) is the set  \(\lbrace \NS(S)=v_j | (v_j, v_i) \in \EA\sqcup E_B, v_i \in S \rbrace \). Each vertex in \(\NS(S)\) is termed as a \emph{in-neighbour} of \(S\). The directed graph \(\GAB\) is said to have a \emph{dilation} if  there exists a set \(S \subset \VA\) such that \(\abs{\NS(S)} < \abs{S}\). 
\end{itemize}
A fundamental connection between the system theoretic property of structural controllability and certain structural properties of \(\GAB\) is given by
\begin{theorem}[\cite{ref:Lin-74}] 
\label{t:2.4}
The following are equivalent:
\begin{enumerate}[label={\rm (\alph*)}, widest=b, leftmargin=*, align=left]
\item The pair \((\sys,\inp)\) is structurally controllable.
\item The directed graph \(\GAB\) derived from \eqref{e:linsys} as described in \S\ref{s:background} has all the state vertices accessible from the control vertices, and \(\GAB\) has no dilation.
\end{enumerate}
\end{theorem}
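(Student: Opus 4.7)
The plan is to prove the two directions separately, since the equivalence has the character of a genericity result (sufficiency) bolted onto two clean necessary obstructions (necessity).

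For necessity, I would treat the two structural conditions independently. First, suppose some state vertex $v$ is inaccessible from the control vertices. Let $S \subset \VA$ be the set of inaccessible vertices and reorder states so that accessible states come first. Because no edge of $E_A$ points from an accessible vertex to an inaccessible one (otherwise that inaccessible vertex would be accessible), and no edge of $E_B$ touches $S$, the matrices take the block form $A = \begin{pmatrix} A_{11} & 0 \\ A_{21} & A_{22}\end{pmatrix}$ and $B = \begin{pmatrix} B_1 \\ 0\end{pmatrix}$ in the new coordinates. Any power $A^k B$ then has zeros in the rows corresponding to $S$, so the Kalman controllability matrix has rank at most $d - |S| < d$ for \emph{every} realisation of the free parameters; hence $(A,B)$ is not structurally controllable. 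Second, suppose $G(A,B)$ has a dilation, i.e.\ a set $S \subset \VA$ with $|N^{-}(S)| < |S|$. I would invoke the Popov--Belevitch--Hautus test: restrict the $|S| \times (d+m)$ submatrix of $[A \ B]$ to rows indexed by $S$, and note that its nonzero columns are confined to the indices in $N^{-}(S)$. Hence this submatrix has rank at most $|N^{-}(S)| < |S|$, which forces $[A\ B]$ itself to be rank-deficient; combined with a standard shift argument on $[A-\lambda I\ B]$ at $\lambda = 0$ (or via a generic $\lambda$), this rules out structural controllability.

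For sufficiency, I would pass to the bipartite representation and use a genericity argument. Assume every state is accessible and there is no dilation. The no-dilation condition is precisely Hall's condition on the bipartite graph obtained by augmenting $H(A)$ with the control-side vertices $V_B$ and the edges of $E_B$: for every $S \subset V_A^2$, the set of neighbours in $V_A^1 \sqcup V_B$ has size at least $|S|$. By Hall's marriage theorem this yields a matching $\match$ that saturates $V_A^2$. The edges of $\match$, together with directed paths from control vertices that realise the accessibility of each state, can be assembled into a spanning \emph{cactus} structure in $G(A,B)$ — a union of stems and buds covering all state vertices. I would then show that the associated determinant of the $d \times d$ minor of the controllability matrix $[B\ AB\ \cdots\ A^{d-1}B]$, viewed as a polynomial in the nonzero entries of $A$ and $B$, is not identically zero: concretely, the monomial coming from the cactus survives as it cannot be cancelled by any other term. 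Since the vanishing locus of a nonzero polynomial in the free parameters has Lebesgue measure zero, \emph{generic} realisations are controllable, which is exactly structural controllability of $(A,B)$.

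The main obstacle is the sufficiency direction, and within it the non-cancellation argument for the cactus monomial. Accessibility and the matching obtained from Hall's theorem each give a necessary combinatorial skeleton, but neither alone produces a set of $d$ linearly independent columns in the controllability matrix for generic parameters; the delicate step is gluing the matching (which supplies a set of vertex-disjoint cycles and disjoint path fragments) with the accessibility paths (which may share vertices with the buds) into a structure whose contribution to the controllability determinant is a unique, uncancellable monomial. I would handle this by growing the cactus greedily: start from the stems rooted at control vertices that cover a maximum set of states via accessibility, then attach buds corresponding to unmatched cycles of $\match$ through well-chosen distinguished edges, and finally argue multilinearity of the determinant in the free parameters to isolate the cactus monomial. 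Once that polynomial is shown to be nonzero, the measure-zero argument closes the proof.
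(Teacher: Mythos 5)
This theorem is imported by the paper from \cite{ref:Lin-74} and is stated without proof, so there is no in-paper argument to compare against; your proposal is, in outline, the classical proof of Lin's theorem (necessity via the two canonical obstructions, sufficiency via a spanning cactus plus a genericity argument), so the route itself is the right one. Your necessity direction is essentially complete. One slip: with the convention \(a_{ij}\neq 0\iff (v_j,v_i)\in \EA\) and the accessible states listed first, the vanishing block of \(\sys\) is the one with \emph{rows} indexed by the inaccessible set \(S\) and columns indexed by the accessible states, i.e.\ \(\sys=\left(\begin{smallmatrix} A_{11} & A_{12}\\ 0 & A_{22}\end{smallmatrix}\right)\), not the block form you wrote; as written, \(\sys^k\inp\) would \emph{not} have zero rows on \(S\). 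The conclusion you draw is the one that follows from the corrected block form, so this is a transcription error rather than a conceptual one. The dilation argument via the rank of the \(S\)-rows of \([\sys\ \inp]\) and PBH at \(\lambda=0\) is correct.

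The genuine gap is in sufficiency, and you have correctly located it yourself: everything after "can be assembled into a spanning cactus structure" is a statement of intent, not an argument. Two things are missing. First, the combinatorial lemma that accessibility plus Hall's condition actually yields a spanning cactus: the matching gives each state vertex in-degree one, hence a disjoint union of input-rooted stems and isolated cycles, and you must show the isolated cycles can always be attached to the accessible part through distinguished edges without destroying the disjointness that the monomial argument needs; this requires an induction on the cycles, not just the observation that each cycle vertex is accessible. Second, the non-cancellation of the cactus monomial in the \(d\times d\) minor of the controllability matrix is exactly the hard computation, and "the monomial coming from the cactus survives as it cannot be cancelled" is asserted, not proved; the standard way around this (and the one most expositions of Lin's theorem take) is to avoid the determinant expansion altogether and instead exhibit one explicit numerical realisation of the cactus that is controllable, which suffices because structural controllability only requires a single controllable realisation with the given sparsity pattern, generic controllability then following from the fact that the set of controllable realisations is the complement of an algebraic variety. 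As it stands, the proposal proves necessity and sketches a program for sufficiency.
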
  

\noindent\textbf{Definition 1:} Given a \(\delta \in \lbrace 0,1\rbrace^d\), let \(\rt\Let\lbrace v_i: v_i\in V(\GA)\) \text{and} \(\delta_i=1\rbrace\) then, \(\rt\) is termed the \emph{root set} and \(v_i\in \rt\) is called a \emph{root vertex}.  Those directed edges terminating at one of the root vertices are termed as \emph{root edges}.
 
The following more recent structural result will also be needed in the sequel:
\begin{theorem}[\cite{ref:PeqKarAgu-13}]
\label{t:2.5}
Let \(\GA\) be the state digraph and \(\HA\) be the associated bipartite graph, then the following statements are equivalent:
\begin{enumerate}[label={\rm (\alph*)}, widest=b, leftmargin=*, align=left]
\item The set \(\rt\) is a dedicated input configuration.\footnote{ In this article \({\inp}\) is considered as \({\inp}\Let \diag(\delta)\), where \(\delta \in \lbrace 0,1\rbrace^d\). Here, if \(\delta_i=1\) state vertex \(v_i\) receives an input, while if \(\delta_i=0,\) it receives  no input.}
\item There exists a subset \(\mathcal{U}(\match)\subset \rt\) corresponding to the set of unmatched vertices of some maximum matching \(\match\) of \(\HA\), and a subset \(\mathcal{A} \subset \rt\) consisting of one state variable from SCC of \(\GA\) that has no incoming edge.
\end{enumerate}
\end{theorem}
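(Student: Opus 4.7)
The strategy is to reduce both implications to the Lin criterion (Theorem~\ref{t:2.4}) applied to the digraph \(\GAB\) with \(\inp = \diag(\delta)\), identifying separately how each of the two ingredients of (b) neutralizes exactly one of the two obstructions to structural controllability---namely, dilations and inaccessibility.

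For the dilation side I would exploit the classical correspondence between Hall-type deficiency in \(\HA\) and dilations in \(\GA\). An unmatched right vertex \(v_i^2\) of a maximum matching \(\match\) of \(\HA\) corresponds to a state variable \(v_i\) that cannot receive an incoming edge inside the partial edge-set encoded by \(\match\); by K\"onig's theorem, the number of such unmatched vertices equals the maximum Hall deficiency \(\max_{S \subset \VA}\bigl(\lvert S\rvert - \lvert \NS(S)\rvert\bigr)^{+}\), which is precisely the dilation deficit in \(\GA\). Placing a dedicated input at such a \(v_i\) supplies a new left-side matching option for \(v_i^2\) in the bipartite graph associated with \(\GAB\). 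Consequently, \(\mathcal{U}(\match) \subset \rt\) is precisely the condition ensuring that \(\GAB\) admits a bipartite matching that saturates every right vertex, equivalently that \(\GAB\) is dilation-free.

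For accessibility, every state vertex of \(\GAB\) must be reachable from a control vertex, and since inputs are attached only to vertices in \(\rt\), this reduces to every state being reachable from \(\rt\) inside \(\GA\). Translating to the DAG of SCCs of \(\GA\), a vertex is \(\rt\)-reachable iff its SCC lies downstream of an SCC containing a root vertex, and this holds universally iff every source SCC (those with no incoming DAG-edge) contains at least one root vertex---which is exactly the requirement \(\mathcal{A} \subset \rt\).

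With these two equivalences in hand, (a)\(\Rightarrow\)(b) is obtained by invoking Theorem~\ref{t:2.4} on \((\sys, \diag(\delta))\) and extracting the subsets \(\mathcal{U}(\match)\) and \(\mathcal{A}\) from \(\rt\); the converse follows by verifying both Lin conditions for \(\GAB\) from the hypothesized subsets. The main obstacle is the first equivalence: one must carefully justify, via Hall/K\"onig, that dedicated inputs at the unmatched vertices of \emph{one} particular maximum matching already destroy every dilation, even though maximum matchings are non-unique (Remark~\ref{r:2.1}); the cleanest route is an augmenting-path/exchange argument showing that the unmatched-set sizes agree across all maximum matchings, so that covering any single one of them suffices.
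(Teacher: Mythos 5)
The paper does not actually prove Theorem~\ref{t:2.5}: it is imported verbatim from \cite{ref:PeqKarAgu-13} and used as a black box, so there is no in-paper argument to compare yours against. Judged on its own, your sketch follows the standard route to this result and the overall architecture is sound: split Lin's criterion (Theorem~\ref{t:2.4}) into its two obstructions, match the dilation-freeness of \(\GAB\) with the unmatched right vertices of a maximum matching of \(\HA\) via the deficiency form of K\"onig's theorem, and match accessibility with the requirement that every source SCC of the DAG of \(\GA\) contain a root vertex. The easy direction of the matching half is even easier than you suggest: if \(\rt \supset \mathcal U(\match)\), then \(\match\) together with the edge from each dedicated input to its unmatched right vertex is already a right-perfect matching of the extended bipartite graph, so no exchange argument is needed there.

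The one place where your plan is imprecise is the final paragraph. Showing that \emph{the sizes} of the unmatched sets agree across all maximum matchings (which is immediate, since all maximum matchings have the same cardinality) does not resolve the difficulty you correctly identify; the sets themselves differ, and the theorem asserts only that \emph{some} maximum matching has its unmatched vertices inside \(\rt\). The step that actually needs work is (a)\(\Rightarrow\)(b): given that dedicated inputs at \(\rt\) render \(\GAB\) dilation-free, take a right-perfect matching \(\M\) of the extended bipartite graph, restrict it to the edges of \(\HA\), and show that this restriction can be chosen to be a \emph{maximum} matching of \(\HA\) (e.g., by choosing \(\M\) to use the fewest input edges, or by an alternating-path exchange starting from an arbitrary maximum matching of \(\HA\)); its unmatched right vertices are then exactly the states served by inputs, hence contained in \(\rt\). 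With that lemma stated and proved, your outline closes; without it, the forward implication is not yet established.
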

A root set \(\rt\) obtained via Theorem \ref{t:2.5} ensures structural controllability of \(\GA\).

\section{Main Results}
\label{s:main results}
We catalogue some important notions specific to digraphs:

\noindent\textbf{Definition 2:} The minimum number of non-root edges whose removal makes \(G\) not structurally controllable is referred as the \emph{edge-controllability index} of the digraph \(G\), and is denoted by \(\ec_{R}(G)\) w.r.t root set \(R\). The digraph \(G\) is said to be \(k\)-edge-controllable if its edge-controllability is atleast \(k\).


 
Let the assume that a digraph \(G=(V, E)\) is structurally controllable w.r.t \(\rt\) then we introduce the notion of \emph{critical} edges in the digraph.

\noindent\textbf{Definition 3:}
 Given a digraph \(\gra=(V, E)\) and a  root set \(\rt\), an edge \((v_i,v_j) \in E\) is said to be \textit{critical/sensitive} if the digraph \(G_{ji}\) obtained by deleting edge \((v_i,v_j)\) is not structurally controllable with respect to \(\rt\).
  
For each edge \(e\in E(G)\), the criticality of \(e\) is examined by analysing structural controllability of digraph obtained by deleting \(e\) from \(G\). Therefore the problem of finding critical edges can be acheived in polynomial time as it is equivalent to analysing structural controllability of \(\abs{E}\) digraphs \cite{ref:KMur-87}. In view of Theorem \ref{t:2.4}, deletion of a critical edge creates either input inaccessibility or dilation or both with respect to \(\rt\). In other words, removal of a critical edge  \(e\) creates either of the two sets:
\begin{enumerate}[label={\rm (\alph*)}, widest=b, leftmargin=*, align=left]
\item An SCC \(\Xe\) with in-degree \(0\), i.e., \(\de(\Xe)=0\). As \(\Xe \cap \rt = \emptyset\), every  vertex \(v \in \Xe\) is inaccessible from \(\rt\).
\item A minimal set \(\Se\) such that \(\abs{\Se} > \abs{\NS(\Se)}\) and \(\Se \cap \rt = \emptyset\). 
\end{enumerate}
Figure (2) displays the critical edges, and sets \(\Xe\) and \(\Se\) corresponding to them, in an illustrative example.  
 \tikzset{middlearrow/.style={
        decoration={markings,
            mark= at position 0.5 with {\arrow{#1}} ,
        },
        postaction={decorate}
    }
}
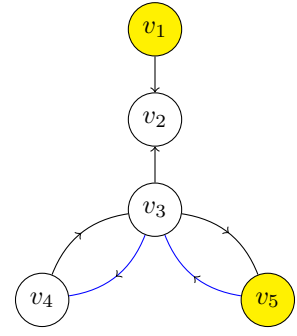
\begin{figure}
\label{f:example}
 \begin{center}

\begin{tikzpicture}
\node[shape=circle,draw=black,fill=yellow] (1) at (0,0) {$v_1$};
    \node[shape=circle,draw=black] (2) at (0,-1.2) {$v_2$};
    \node[shape=circle,draw=black] (3) at (0,-2.4) {$v_3$};
    \node[shape=circle,draw=black] (4) at (-1.5,-3.6) {$v_4$};
    \node[shape=circle,draw=black,fill=yellow] (5) at (1.5,-3.6) {$v_5$};
   
    \path [->,solid, draw=blue](1) edge node[right] {} (2);
     \path [->,solid](3) edge node[right] {} (2); 
   \draw[bend left,middlearrow={>},draw=blue]   (3) to node [auto] {} (4);
    \draw[bend left,middlearrow={>}]   (3) to node [auto] {} (5);
    \draw[bend left,middlearrow={>}]   (4) to node [auto] {} (3);
    \draw[bend left,draw=blue,middlearrow={>}]   (5) to node [auto] {} (3);
   
    \end{tikzpicture}

    \end{center}
    \caption{The yellow state vertices depicts the initial root set \(\rt=\lbrace v_1, v_5 \rbrace\). The critical edges are  \(e_1=(v_1, v_2), e_2=(v_5, v_3),\) and \(e_3=(v_3, v_4)\) coloured in blue.  Removal of edges  \(e_1, e_2\), and \( e_3\) yield sets \(S_{e_1}=\lbrace v_2, v_4 \rbrace\),
 \(X_{e_2}=\lbrace v_3, v_4\rbrace\), and \(X_{e_3}=\lbrace v_4 \rbrace\).}
    \end{figure}
    
   \begin{remark}
   \label{r:addedge}
   If \(G\) is \(2\)-edge controllable w.r.t \(\rt\), then \(G'\) obtained by adding an edge between any two vertices is also \(2\)-edge controllable w.r.t \(\rt\).
   \end{remark}

Let \(\bar{\sys}\) and \(\bar{\inp}\) represent the structured/sparsity pattern of \(\sys\) and \(\inp\). \footnote{Location of zeroes and non-zeroes in the system and control matrices of \eqref{e:linsys} of \(\sys\), \(\inp\).}.

 Now \eqref{e:key problem} can be reformulated in the following manner: Given a state digraph \(\gra(\bar{\sys})=(V_{\bar{\sys}}, E_{\bar{\sys}})\) corresponding to \(\bar{\sys}\), select a minimal root set \(\rtn\subset V(\gra(\bar{\sys}))\) such that \(\ec_{\rtn}(G(\bar{\sys})) \geq 2\).

Although \eqref{e:key problem} is a challenging problem as evidenced in Remark \ref{r:2.2}, we adopt a step-wise procedure to address it: 

\textbf{Step 1:} Suppose \(\gra (\bar{\sys})\) is structurally controllable w.r.t to an initial root set \(R\). Recall that an initial root set \(R\) can be computed by using Theorem \ref{t:2.5}. The first step involves obtaining subgraphs \(G_1, G_2, \ldots, G_k\) of \(\gra (\bar{\sys})\) such that each subgraph is structurally controllable w.r.t \(R\). The aim is to identify the additional root vertices such that each subgraph \(G_i\) has \(\ec_{R}(G_i) \geq 2\) for \(i= 1,2,\ldots, k\). This is accomplished by adopting an efficient algorithm/procedure to obtain the sets \(\Xe\)s (as in (a) above) along with  its critical edges.
 \begin{algo}
   \label{algo 1}
        \mbox{}
            \begin{description}
            \item[1] \textbf{Input:} Given \(G_1, G_2, \ldots, G_k\) be the subgraphs of \(\gra (\bar{\sys})\) such that each subgraph \(G_i\) has \(\Delta^{+}(G_i) \leq 2\) and is structurally controllable w.r.t \(R\) 
            \item[2]  \textbf{for} \(i\) from \(1\) to \(k\) \textbf{do}
            \item[2a] Identify critical edges of \(G_i\).
            \item[2b] Identify the SCC \(\Xe\) corresponding to each critical edge \(e\) of \(G_i\).
            \item[3]\textbf{end do}
            \item[4]\textbf{Output:} critical edges and \(\Xe\)s corresponding to them.
         \end{description}
        \end{algo} 
Algorithm \ref{algo 1} allows us to identify the \(\Xe\)s corresponding to the critical edges in the subgraph obtained from \(G(\bar{\sys})\).\footnote{Note that invoking Lemma \ref{t:SCC} allows us to consider only \(\Xe\)s to attain resilience w.r.t edge failure, as opposed to both \(\Xe\)s and \(\Se\)s.}
\begin{proposition}
\label{p:algocom}
Algorithm \ref{algo 1} has polynomial-time complexity.
\end{proposition}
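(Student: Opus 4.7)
The plan is to analyze the two inner operations in the for-loop separately, bound each by a polynomial in the size of $G_i$, and then aggregate over the $k$ subgraphs.

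For step~2a, I would argue as follows. Identifying the critical edges of $G_i$ means, by Definition~3, deciding for each $e \in E(G_i)$ whether $G_i$ with $e$ deleted is structurally controllable with respect to $R$. Theorem~\ref{t:2.4} reduces this test to two checks on $G_i \setminus \{e\}$: (i)~accessibility of every state vertex from $R$, which can be decided by a BFS/DFS in $O(\lvert V(G_i)\rvert + \lvert E(G_i)\rvert)$, and (ii)~absence of dilation, which by the bipartite reformulation $H(G_i)$ described in \S\ref{s:background} can be decided by computing a maximum matching in $O\bigl(\sqrt{\lvert V(G_i)\rvert}\,\lvert E(G_i)\rvert\bigr)$. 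Thus a single controllability check is polynomial, and iterating it over all $\lvert E(G_i)\rvert$ edges still costs a polynomial in $\lvert V(G_i)\rvert + \lvert E(G_i)\rvert$.

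For step~2b, once a critical edge $e$ has been flagged, the accompanying set $X_e$ is, by the discussion preceding Algorithm~\ref{algo 1}, an SCC of $G_i\setminus\{e\}$ with $d^{-}(X_e)=0$. Computing all SCCs of $G_i \setminus \{e\}$ (and its condensation/DAG) takes $O(\lvert V(G_i)\rvert + \lvert E(G_i)\rvert)$ via Tarjan's algorithm, as already mentioned in \S\ref{s:background}. Identifying the unique source-SCC containing the head of $e$ then costs $O(1)$ additional time. Iterating this over at most $\lvert E(G_i)\rvert$ critical edges keeps the cost polynomial.

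Putting the two pieces together and exploiting the hypothesis $\Delta^{+}(G_i) \leq 2$, which forces $\lvert E(G_i)\rvert \leq 2\lvert V(G_i)\rvert$, I obtain a per-subgraph bound polynomial in $\lvert V(G_i)\rvert$. Summing over $i=1,\ldots,k$ and using $\sum_{i} \lvert V(G_i)\rvert \leq k\,\lvert V(G(\bar A))\rvert$ yields the claimed polynomial overall complexity. I do not anticipate a real obstacle here: the bookkeeping is routine, the only mildly delicate point being to justify that in step~2b we may restrict attention to the $X_e$-type sets and ignore the $S_e$-type sets of the dilation case, which is precisely what the footnote invoking Lemma~\ref{t:SCC} guarantees.
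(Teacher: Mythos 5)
Your proposal is correct and follows essentially the same route as the paper's proof: per critical-edge structural-controllability checks (accessibility plus maximum matching, giving the $\mathcal{O}(\sqrt{\lvert V(G_i)\rvert}\,\lvert E(G_i)\rvert^2)$ bound for step 2a) and linear-time SCC/DAG computation for step 2b, summed over the $k$ subgraphs. You merely supply more of the routine justification (Theorem \ref{t:2.4}, the degree bound $\lvert E(G_i)\rvert \le 2\lvert V(G_i)\rvert$) than the paper spells out.
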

A proof of Proposition \ref{p:algocom} is presented in \S\ref{s:Appendix}.
\begin{remark}
\label{r:algoinf}
It is  not necessary that total number of vertices in all the subgraphs encountered in Algo \ref{algo 1} is equal to the number of  vertices in \(G\).
\end{remark}
Consider all the \(\Xe\)s obtained by removal of edges, say \({\lbrace \Xe^i\rbrace}_{i \in \I}\) with \(\I=\lbrace 1,2,3,\ldots, n\rbrace\) and \(\Ze=\lbrace v_1,v_2,\ldots v_d\rbrace\). Define the family of subsets \(\F_j=\lbrace i\in \I: v_j \in \Xe^i \rbrace\) for \(j=1,2,\ldots, d\). If there exists an \(L \subset \lbrace 1,2,\ldots d\rbrace\) such that
\[
\I \subset \bigcup_{j \in L}\F_j
\]
then the family \({\lbrace \F_j \rbrace}_{j \in L}\) covers \(\I\) and
\begin{align*}
\rtn&= \bigcup_{j \in L}\lbrace v_j \rbrace \cup \rt,
\end{align*}
where \(\rtn\) is the minimal root set such that each subgraph is robust w.r.t one edge failure. There exists a greedy approximation algorithm that provides a set cover that is \(\ln n\) larger than an optimal set cover \cite{ref:CorLeiRivSte-09}.

The second step involves addition of vertices to the subgraphs to retrieve the original \(G(\bar{\sys}) \) such that \(G(\bar{\sys})\) is also resilient to an edge failure w.r.t \(\rtn\).

\textbf{Step 2:}
 To obtain the original graph \(G(\bar{A})\), vertices are added to the subgraphs. Let \(G_1\) be a subgraph obtained from step \(1\) above. Suppose \(z\) is added to \(G_1\).\footnote{Note that addition of \(z\) can connect subgraphs also. Here its is assumed that the edges corresponding to \(z\) are added to only one subgraph i.e. \(G_1\). The addition of \(z\) to more than one subgraph is examined in section \S\ref{s:example}.} Let \(G_1^{*}\) be the new subgraph obtained by adding \(z\) and its edges to \(G_1\).
 
 If \(S \subset V(G_1)\) be a set satisfying the following conditions:
 \begin{enumerate}
 \item \(\abs{S}=\abs{N^{-}(S)}\), and
 \item even after removal of any one edge \(e=(x,y)\), where \(x \in N^{-}(S)\) and \(y \in S\), condition 1) holds,
 \end{enumerate}
then \(S\) is termed as a \emph{critical set}. The following theorem ensures that \(G_1^{*}\) obtained by adding a vertex \(z\) to the subgraph \(G_1\) is also robust w.r.t a failure with the same root set \(\rtn\).
\begin{theorem}
\label{l:addnode}
Let \(z\) and its edges are added to a subgraph \(G_1\) and  \(S \subset V(G_1)\) be a critical set.
If  \(G_1\) is \(2\)-edge controllable w.r.t \(\rtn\), then \(G_1^{*}=G_1 \cup \lbrace z \rbrace\) obtained by adding a new vertex \(z\) is also \(2\)-edge controllable w.r.t \(\rtn\), if the following conditions are satisfied:
\begin{enumerate}[label={\rm (\alph*)}, widest=b, leftmargin=*, align=left]
\item \(z\) has at least two in-neighbours from \(G_1\), and
\item if \(N^{-}(z) \cap N^{-}(S)\neq \emptyset\), then \(z\) has at least two in-neighbours not contained in \(N^{-}(S)\).  
\end{enumerate}
\end{theorem}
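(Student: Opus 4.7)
The plan is to verify both conditions of Theorem \ref{t:2.4} for $G_1^{*} \setminus e$ for every non-root edge $e$: every vertex is accessible from $\rtn$, and no dilation arises. I split the argument on whether $e \in E(G_1)$ or $e$ is one of the new edges incident to $z$.

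Accessibility is straightforward. If $e \in E(G_1)$, the $2$-edge controllability of $G_1$ ensures $G_1 \setminus e$ is structurally controllable, so every vertex of $V(G_1)$ remains accessible from $\rtn$ inside $G_1^{*} \setminus e$; condition (a) then supplies at least two in-neighbours of $z$ in $G_1$, hence at least one accessible one. If instead $e$ is incident to $z$, the subgraph $G_1$ is untouched, so $V(G_1)$ is accessible, and condition (a) again leaves at least one surviving in-neighbour of $z$ from $G_1$.

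For the dilation condition, suppose for contradiction that $T \subset V(G_1^{*}) \setminus \rtn$ satisfies $\lvert \NS(T) \rvert < \lvert T \rvert$ in $G_1^{*} \setminus e$. If $z \notin T$, then $T \subseteq V(G_1)$ and $\NS_{G_1^{*} \setminus e}(T) \supseteq \NS_{G_1 \setminus e}(T)$, contradicting the $2$-edge controllability of $G_1$ (and the no-dilation property of $G_1$ itself if $e$ is incident to $z$). The substantive case is $z \in T$: with $T' = T \setminus \{z\}$, the case $T' = \emptyset$ is ruled out by (a) since $z$ retains at least one in-neighbour after any single edge deletion, and for $T' \neq \emptyset$ the decomposition
\[
\NS_{G_1^{*} \setminus e}(T) \;=\; \NS_{G_1^{*} \setminus e}(T') \,\cup\, \bigl( \NS_{G_1^{*} \setminus e}(z) \setminus \{z\} \bigr)
\]
shows that dilation on $T$ forces (i) $T'$ to be tight in $G_1^{*} \setminus e$, i.e.\ $\lvert \NS(T') \rvert = \lvert T' \rvert$, and (ii) $\NS_{G_1^{*} \setminus e}(z) \subseteq \NS_{G_1^{*} \setminus e}(T') \cup \{z\}$. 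I would then argue that (i) and (ii) cannot hold simultaneously. Because $G_1$ is $2$-edge controllable, any set whose tightness survives an in-edge removal is critical by definition; so (i) anchors $T'$ at the critical set $S$ in the sense that $\NS(T') \cap \NS(z) \subseteq \NS(S)$. Condition (b) is now activated: the overlap $\NS(z) \cap \NS(S)$ is nonempty (otherwise (ii) would directly contradict (a)), so (b) furnishes at least two in-neighbours of $z$ outside $\NS(S) \supseteq \NS(T') \cap \NS(z)$, which contradicts (ii).

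The principal obstacle is rigorously anchoring the tight set $T'$ at the critical set $S$, i.e.\ showing that the only tight sets in $G_1^{*} \setminus e$ capable of absorbing all in-neighbours of $z$ are governed by $S$. I expect to handle this via the submodular-type inequality $\lvert \NS(X \cup Y) \rvert + \lvert \NS(X \cap Y) \rvert \le \lvert \NS(X) \rvert + \lvert \NS(Y) \rvert$, which propagates tightness through unions and intersections and collapses the family of ``bad'' tight subsets of $V(G_1)$ onto the critical set, together with a König-type matching argument on the bipartite representation $H(A)$ to transfer tightness from the perturbed digraph back into the structure of $G_1$.
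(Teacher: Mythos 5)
Your setup (checking accessibility and absence of dilation in \(G_1^{*}\setminus e\) for every non-root edge \(e\), and splitting on whether \(z\) belongs to the putative dilation set \(T\)) is sound, and the accessibility half is complete and matches the paper's. But the dilation half has a genuine gap, and it is exactly the step you flag yourself: anchoring the tight set \(T'=T\setminus\{z\}\) to the critical set \(S\). Your chain of deductions correctly reduces the problem to ruling out a tight set \(T'\) in \(G_1^{*}\setminus e\) with \(\NS(z)\subseteq \NS(T')\cup\{z\}\), but the claim that this forces \(\NS(T')\cap\NS(z)\subseteq\NS(S)\) is asserted, not proved. Two things are missing. First, your statement that ``any set whose tightness survives an in-edge removal is critical by definition'' conflates survival under the one particular edge \(e\) with survival under removal of \emph{any} in-edge, which is what Definition of a critical set requires; the correct (and provable) observation is the converse direction: in a \(2\)-edge-controllable graph every tight set disjoint from \(\rtn\) is automatically critical, since an in-neighbour with a single edge into the set would produce a dilation upon deleting that edge. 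Second, even granting that \(T'\) is critical, the theorem's hypothesis names a \emph{single} critical set \(S\), and nothing in your argument (or in the statement) forces \(T'=S\) or \(\NS(T')\cap\NS(z)\subseteq\NS(S)\) when several critical sets coexist. Your proposed repair via the submodular inequality \(\abs{\NS(X\cup Y)}+\abs{\NS(X\cap Y)}\le\abs{\NS(X)}+\abs{\NS(Y)}\) is the right instinct for collapsing the family of tight sets onto a maximal one, but it is only a plan; until it is executed the central implication of the theorem is unestablished.

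For comparison, the paper's own proof takes a much more modest route: it never quantifies over all dilation sets \(T\). It examines only the one set \(S_z=S\cup\{z\}\) and does a three-way case split on the number of in-neighbours of \(z\) lying outside \(\NS(S)\) (zero, one, at least two), showing that in the first two cases \(S_z\) dilates (possibly after deleting the single edge leaving \(\NS(S)\)) while in the third case \(\abs{\NS(S_z)}\ge\abs{\NS(S)}+2\) so one deletion cannot dilate it. That argument establishes necessity of conditions (a)--(b) cleanly but, for sufficiency, silently assumes that \(S\cup\{z\}\) is the only candidate dilation set --- i.e.\ it omits precisely the quantification you attempted to supply. So your approach is the more honest one, but to complete it you must actually prove the anchoring lemma: every tight subset of \(V(G_1)\setminus\rtn\) whose in-neighbourhood absorbs \(\NS(z)\) has its relevant in-neighbours inside \(\NS(S)\), which in turn requires either a uniqueness/maximality statement about critical sets or a strengthening of hypothesis (b) to all critical sets.
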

We provide a proof in \S\ref{s:Appendix}.
Theorem \eqref{l:addnode} allows us to add vertices to the graph consecutively such that robustness of graph w.r.t failure is preserved. This completes the two-step procedure involved to obtain the root set \(\rtn\) for \(G(\bar{\sys})\) such that \(\ec_{\rtn}(G) \geq 2\). 
\begin{remark}
\label{r:depend}
The minimal root set \(\tilde{R}\) obtained in step \(1\) depends on the initial root set \(\rt\) and the way the subgraphs have been constructed there from the original graph \(\gra(\bar{\sys})\).
\end{remark}
\section{Illustrative Example}
\label{s:example}
Let us consider the network topology \(G\) of the IEEE 14-bus system \cite{ref:smagrid} depicted in Fig. (3). \textbf{Each undirected edge between the two vertices denotes bidirectional edges between them.} Let \(R=\lbrace v_8, v_{10}\rbrace\) be the initial root set such that \(G\) is structurally controllable w.r.t \(R\). Two subgraphs  \(G_1= \lbrace v_1, v_2, v_3, v_4, v_7, v_8\rbrace\) and \(G_2=\lbrace v_6, v_{10}, v_{11}, v_{12}, v_{13}, v_{14} \rbrace\) are obtained such that \(G_1 \sqcup G_2\) is structurally controllable w.r.t \(R\) as shown in Fig (4). The critical edges corresponding to \(\rt\) are:
\(e_1=(v_2,v_1)\), \(e_2=(v_3,v_2)\), \(e_3=(v_4,v_3)\), \(e_4=(v_7,v_4)\), \(e_5=(v_8,v_7)\), \(e_6=(v_{10},v_{11})\), \(e_7=(v_{11},v_6)\), \(e_8=(v_{6},v_{12})\), \(e_9=(v_{12},v_{13})\), \(e_{10}=(v_{13},v_{14})\). Each critical edge creates an \(\Xe\) corresponding to it. \(X_{e_1}=\lbrace v_1 \rbrace\), \( X_{e_2}=\lbrace v_1, v_2 \rbrace\), \(X_{e_3}=\lbrace v_1,v_2,v_3 \rbrace\), \(X_{e_4}=\lbrace v_1, v_2, v_3, v_4 \rbrace\), \(X_{e_5}=\lbrace  v_1, v_2, v_3, v_4, v_7 \rbrace\), \( X_{e_6}=\lbrace v_{11},v_6, v_{12},v_{13},v_{14} \rbrace\), \(X_{e_7}=\lbrace v_6, v_{12},v_{13},v_{14} \rbrace\), \(X_{e_8}=\lbrace v_{12},v_{13},v_{14} \rbrace\), \(X_{e_9}=\lbrace v_{13},v_{14} \rbrace\), and \(X_{e_{10}}=\lbrace v_{14} \rbrace\). By using greedy approximation algorithm for set cover as in step \(1\), the additional root vertices required for ensuring robustness w.r.t an arbitrary edge failure is computed. The solution of the set cover is \(\rtn=\lbrace v_1, v_8, v_{10}, v_{14}\rbrace\) displayed in Fig. (5). Hence, \(\rtn\) is a root set such that each subgraph is \(2\)-edge controllable. 

Now \(\lbrace v_9\rbrace\) is added to the subgraph \(G_2\) with its undirected edges \((v_{9},v_{10})\) and \((v_{9},v_{14})\) to obtain \(G_2^{*}\) as shown in Fig. (6). As each undirected edge between the two vertices represent two bidirectional edges between them, four edges correponding to \(v_9\) are added. Since \(v_9\) has two in-neighbours \( v_ {10}\) and \(v_{14}\) condition(a) of Theorem \ref{l:addnode} is satisfied . Also \(G_2\) has no critical set in it. Hence, \(\ec_{\rtn}(G_2^{*}) \geq 2\). Similarly, \(\lbrace v_5 \rbrace\) and its edges \((v_5,v_1)\) and \((v_5,v_6)\) is added which connects  \(G_1\) and \(G_2^{*}\) as shown in Fig  (7). Let the new graph obtained by adding \(v_5\) be \(G^{*}\). \(G^{*}\) has no critical set and \(v_5\) has two in-neighbours \(v_1\) and \(v_6\). As both the conditions of Theorem \ref{l:addnode} are satisfied, \(\ec_{\rtn}(G^{*}) \geq 2\). The remaining edges corresponding to \(v_5\) and \(v_9\) are added to retrieve the original graph \(G\) of IEEE 14-bus power system depicted in Fig. (8). In view of Remark \ref{r:addedge}, \(\ec_{\rtn}(G) \geq 2\). Therefore, \(G\) is resilient w.r.t to an arbitrary edge failure with \(\rtn=\lbrace v_1, v_8, v_{10}, v_{14}\rbrace\). 
\begin{figure}[h!]
\begin{center}
   \begin{tikzpicture}
\node (1) at (0,0) [] {$v_1$};
\node (2) at (1.5,0) [] {$v_2$};
\node (3) at (3,0) [] {$ v_3 $};
\node  (4) at (3,-1) [] {$ v_4 $};
\node (5) at (0,-1) [] {$v_5$};
\node (6) at (0,-2) [] {$v_6$};
\node (7) at (4.5,-1) [] {$ v_7 $};
\node (8) at (4.5,0) [] {$ v_8 $};
\node (9) at (3,-2) [] {$v_9$};
\node (10) at (3,-4.5) [] {$v_{10}$};
\node (11) at (0,-3) [] {$ v_{11} $};
\node (12) at (1.5,-3) [] {$ v_{12} $};
\node (13) at (3.7,-3) [] {$v_{13}$};
\node (14) at (5.5,-3) [] {$v_{14}$};
\draw (1) edge[-] (2) (2) edge[-] (3)  (1) edge[-] (5) 
 (2) edge[-] (5)  (2) edge[-] (4)  (3) edge[-] (4)
 (4) edge[-] (7)  (7) edge[-] (8)  (4) edge[-] (9)
 (5) edge[-] (6)  (6) edge[-] (11)  (6) edge[-] (12)
  (12) edge[-] (13)  (7) edge[-] (9) (9) edge[-] (10)
   (9) edge[-] (14)  (10) edge[-] (11)  (13) edge[-] (14)
    (6) edge[-] (13)  (4) edge[-] (5);
\end{tikzpicture}
 \end{center}
\caption{Network topology of IEEE-14 bus power system \(G\).}
\end{figure}
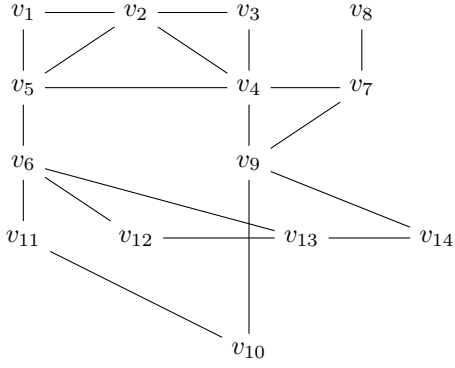

 \begin{figure}[h!]
 \begin{center}
  \begin{tikzpicture}
\node[] (1) at (0,0) [] {$v_1$};
\node (2) at (1.5,0) [] {$v_2$};
\node (3) at (3,0) [] {$ v_3 $};
\node  (4) at (3,-1) [] {$ v_4 $};
\node (6) at (0,-2) [] {$v_6$};
\node (7) at (4.5,-1) [] {$ v_7 $};
\node (8)[shape=circle,draw=brown] at (4.5,0) [] {$ v_8 $};
\node (10)[shape=circle,draw=brown] at (3,-4.5) [] {$v_{10}$};
\node (11) at (0,-3) [] {$ v_{11} $};
\node (12) at (1.5,-3) [] {$ v_{12} $};
\node (13) at (3.7,-3) [] {$v_{13}$};
\node (14)[] at (5.5,-3) [] {$v_{14}$};
\draw (1) edge[-] (2) (2) edge[-] (3)  (6) edge[-] (11)  (6) edge[-] (12)  
   (7) edge[-] (8)  (12) edge[-] (13) (3) edge[-] (4)
 (4) edge[-] (7) 
   (10) edge[-] (11)  (13) edge[-] (14);
\end{tikzpicture}
\end{center}
\caption{The brown colored vertices \(\lbrace v_8,v_{10}\rbrace\) represent the initial root set  \(\rt\). \(G_1= \lbrace v_1, v_2, v_3, v_4, v_7, v_8\rbrace\) and \(G_2=\lbrace v_6, v_{10}, v_{11}, v_{12}, v_{13}, v_{14} \rbrace\) are the subgraphs.}
 \end{figure}
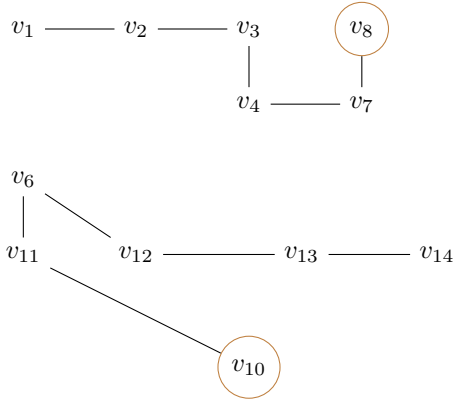
  \begin{figure}[h!]
\begin{center} 
  \begin{tikzpicture}
\node[shape=circle,draw=gray,dashed] (1) at (0,0) [] {$v_1$};
\node (2) at (1.5,0) [] {$v_2$};
\node (3)[] at (3,0) [] {$ v_3 $};
\node  (4) at (3,-1) [] {$ v_4 $};
\node (6) at (0,-2) [] {$v_6$};
\node (7)[] at (4.5,-1) [] {$ v_7 $};
\node (8)[shape=circle,draw=brown] at (4.5,0) [] {$ v_8 $};
\node (10)[shape=circle,draw=brown] at (3,-4.5) [] {$v_{10}$};
\node (11)[] at (0,-3) [] {$ v_{11} $};
\node (12)[]at (1.5,-3) [] {$ v_{12} $};
\node (13) at (3.7,-3) [] {$v_{13}$};
\node (14)[shape=circle,draw=gray,dashed] at (5.5,-3) [] {$v_{14}$};
\draw (1) edge[-] (2) (2) edge[-] (3) (3) edge[-] (4)
 (4) edge[-] (7)  (6) edge[-] (11)  (6) edge[-] (12)
   (7) edge[-] (8)  (12) edge[-] (13)  
   (10) edge[-] (11)  (13) edge[-] (14);
\end{tikzpicture}
 \end{center}
\caption{The dashed gray vertices \(\lbrace v_1, v_{14}\rbrace\) represent the vertices where additional inputs are applied.}
 \end{figure}
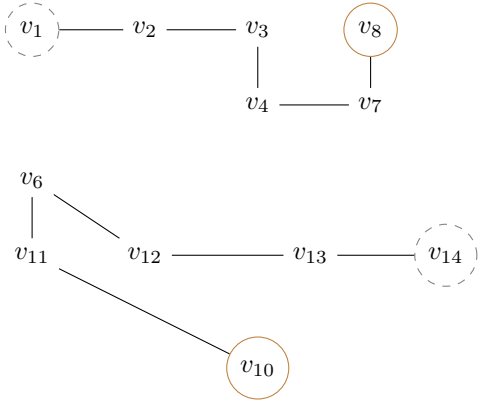
 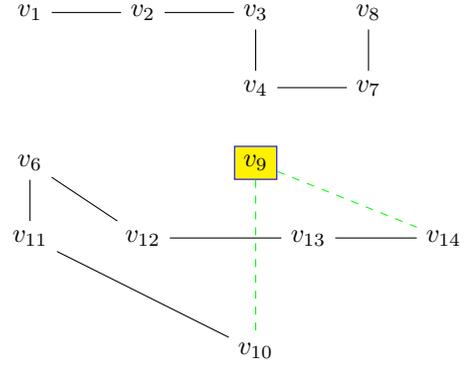
\begin{figure}[h!]
 \begin{center}
\begin{tikzpicture}
\node (1) at (0,0) [] {$v_1$};
\node (2) at (1.5,0) [] {$v_2$};
\node (3) at (3,0) [] {$ v_3 $};
\node[]  (4) at (3,-1) [] {$ v_4 $};
\node[] (6) at (0,-2) [] {$v_6$};
\node (7) at (4.5,-1) [] {$ v_7 $};
\node (8) at (4.5,0) [] {$ v_8 $};
\node[draw=blue,fill=yellow] (9) at (3,-2) [] {$v_9$};
\node (10) at (3,-4.5) [] {$v_{10}$};
\node (11) at (0,-3) [] {$ v_{11} $};
\node (12) at (1.5,-3) [] {$ v_{12} $};
\node (13) at (3.7,-3) [] {$v_{13}$};
\node (14) at (5.5,-3) [] {$v_{14}$};
\draw (1) edge[-] (2) (2) edge[-] (3)  
   (7) edge[-] (8)  (12) edge[-] (13) (9) edge[-,draw=green,dashed] (10)
   (9) edge[-,draw=green,dashed] (14) (3) edge[-] (4)
 (4) edge[-] (7) (6) edge[-] (11)  (6) edge[-] (12)
   (10) edge[-] (11)  (13) edge[-] (14);
\end{tikzpicture}
\end{center}
\caption{\(\lbrace v_9\rbrace\) is added to graph to \(G_2\)}
\end{figure}
 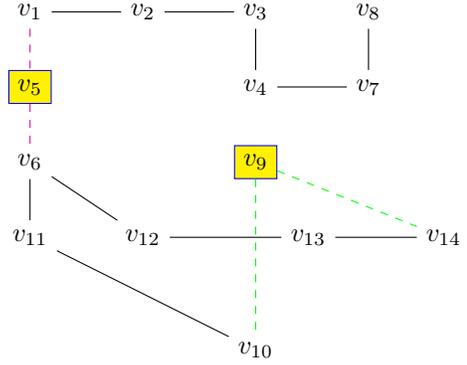
\begin{figure}[h!]
 \begin{center}
\begin{tikzpicture}
\node (1) at (0,0) [] {$v_1$};
\node (2) at (1.5,0) [] {$v_2$};
\node (3) at (3,0) [] {$ v_3 $};
\node[]  (4) at (3,-1) [] {$ v_4 $};
\node[draw=blue,fill=yellow] (5) at (0,-1) [] {$v_5$};
\node[] (6) at (0,-2) [] {$v_6$};
\node (7) at (4.5,-1) [] {$ v_7 $};
\node (8) at (4.5,0) [] {$ v_8 $};
\node[draw=blue,fill=yellow] (9) at (3,-2) [] {$v_9$};
\node (10) at (3,-4.5) [] {$v_{10}$};
\node (11) at (0,-3) [] {$ v_{11} $};
\node (12) at (1.5,-3) [] {$ v_{12} $};
\node (13) at (3.7,-3) [] {$v_{13}$};
\node (14) at (5.5,-3) [] {$v_{14}$};
\draw (1) edge[-] (2) (2) edge[-] (3)  
   (7) edge[-] (8)  (12) edge[-] (13) (9) edge[-,draw=green,dashed] (10)
   (9) edge[-,draw=green,dashed] (14) (3) edge[-] (4)  (1) edge[-,draw=magenta,dashed] (5) 
 (4) edge[-] (7) (6) edge[-] (11)  (6) edge[-] (12) (5) edge[-,draw=magenta,dashed] (6)
   (10) edge[-] (11)  (13) edge[-] (14);
\end{tikzpicture}
\end{center}
\caption{\(\lbrace v_5\rbrace\) is added to the subgraphs \(G_1\) and \(G_2^{*}\).}
\end{figure}
 \begin{figure}[h!]
 \begin{center}
\begin{tikzpicture}
\node (1) at (0,0) [] {$v_1$};
\node (2) at (1.5,0) [] {$v_2$};
\node (3) at (3,0) [] {$ v_3 $};
\node[]  (4) at (3,-1) [] {$ v_4 $};
\node[draw=blue,fill=yellow] (5) at (0,-1) [] {$v_5$};
\node[] (6) at (0,-2) [] {$v_6$};
\node (7) at (4.5,-1) [] {$ v_7 $};
\node (8) at (4.5,0) [] {$ v_8 $};
\node[draw=blue,fill=yellow] (9) at (3,-2) [] {$v_9$};
\node (10) at (3,-4.5) [] {$v_{10}$};
\node (11) at (0,-3) [] {$ v_{11} $};
\node (12) at (1.5,-3) [] {$ v_{12} $};
\node (13) at (3.7,-3) [] {$v_{13}$};
\node (14) at (5.5,-3) [] {$v_{14}$};
\draw (1) edge[-] (2) (2) edge[-] (3)  (1) edge[-] (5) 
 (2) edge[-,draw=blue,dashed] (5)  (2) edge[-,draw=blue,dashed] (4)  (3) edge[-] (4)
 (4) edge[-] (7)  (7) edge[-] (8)  (4) edge[-,draw=blue,dashed] (9)
 (5) edge[-] (6)  (6) edge[-] (11)  (6) edge[-] (12)
  (12) edge[-] (13)  (7) edge[-,draw=blue,dashed] (9) (9) edge[-] (10)
   (9) edge[-] (14)  (10) edge[-] (11)  (13) edge[-] (14)
    (6) edge[-,draw=blue,dashed] (13)  (4) edge[-,draw=blue,dashed] (5);
\end{tikzpicture}
\end{center}
\caption{The original graph is obtained at this stage; It is resilient to one edge failure w.r.t \(\rtn=\lbrace v_1, v_8, v_{10}, v_{14} \rbrace\).}
\end{figure}
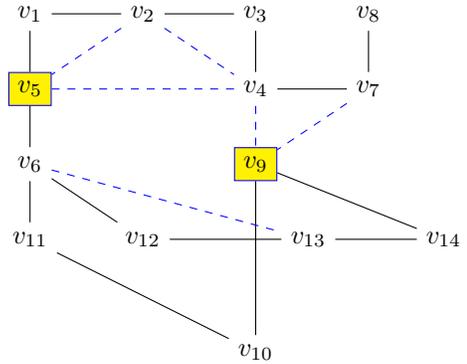

\section{Conclusions and future directions}
\label{s:conclusion}
In this article we presented the problem of finding the minimal number of actuators that ensure that the system remains structurally controllable under external malicious attack. The scenario considered here is the loss of physical connection between the two system states. Due to the combinatorial nature of this problem, we proposed an efficient system-synthesis mechanism to obtain robustness with respect to an edge failure. The effectiveness of our technique is illustrated by using a standard IEEE bus power system. In future we aim to study and develop this technique further to tackle interesting combinatorial problems. A natural extension would be to  determine the applicability of this technique to multiple edge failures occurring simultaneously.

\section{Appendix A}
\label{s:Appendix}
\begin{lemma}
\label{t:SCC}
 Given a digraph \(G=(V, E)\) structurally controllable w.r.t root set \(\rt\) such that \(\Delta^{+}(\gra)\leq 2\) then, for each nonempty \(S_{e_j}\), there exists a nonempty \(X_{e_i}\) where \(e_j, e_i \) \(\in E\) are the critical edges.
\end{lemma}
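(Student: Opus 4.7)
The plan is to locate a single vertex \(v \in S_{e_j}\) whose in-degree in \(G\) equals one, take \(e_i\) to be its unique incoming edge, and observe that \(\{v\}\) then serves as the required \(X_{e_i}\) in \(G - e_i\).

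First, I would record the standard consequences of \(S_{e_j}\) being a minimal dilation set produced by removing \(e_j = (x, y)\). Because \(G\) has no dilation while \(G - e_j\) does, and a single-edge deletion can shrink \(|\NS(S_{e_j})|\) by at most one, we necessarily have \(|\NS(S_{e_j})| = |S_{e_j}|\) in \(G\) and \(|\NS(S_{e_j})| = |S_{e_j}| - 1\) in \(G - e_j\). Moreover, since \(S_{e_j} \cap \rt = \emptyset\), the structural controllability of \(G\) forces \(\de(v) \ge 1\) for every \(v \in S_{e_j}\); otherwise such a \(v\) would already be inaccessible from \(\rt\).

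The key counting step, which I regard as the heart of the proof rather than an obstacle, runs as follows. Since removing \(e_j\) strictly decreases \(|\NS(S_{e_j})|\), the tail \(x\) of \(e_j\) must contribute exactly one edge into \(S_{e_j}\) in \(G\) — otherwise \(x\) would retain an edge into \(S_{e_j}\) after deletion and remain an in-neighbour, contradicting the drop. Among the \(|\NS(S_{e_j})| = |S_{e_j}|\) in-neighbours of \(S_{e_j}\) in \(G\), the vertex \(x\) thus contributes a single edge, and by \(\Delta^{+}(G) \le 2\) each of the remaining \(|S_{e_j}| - 1\) in-neighbours contributes at most two edges. Summing gives at most \(2|S_{e_j}| - 1\) edges entering \(S_{e_j}\) in \(G\). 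Since this total equals \(\sum_{v \in S_{e_j}} \de(v)\), pigeonhole yields some \(v \in S_{e_j}\) with \(\de(v) \le 1\); combined with \(\de(v) \ge 1\), this forces \(\de(v) = 1\).

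Finally, let \(u\) denote the unique in-neighbour of such a \(v\) and set \(e_i = (u, v)\). In \(G - e_i\) the vertex \(v\) has in-degree zero, so \(\{v\}\) is a singleton SCC of \(G - e_i\) with \(\de(\{v\}) = 0\). Because \(v \in S_{e_j}\) is disjoint from \(\rt\), \(v\) is inaccessible from \(\rt\) in \(G - e_i\); by Theorem \ref{t:2.4}, \(G - e_i\) is then not structurally controllable, so \(e_i\) is a critical edge and \(X_{e_i} = \{v\}\) is the required nonempty in-degree-zero SCC. The hypothesis \(\Delta^{+}(G) \le 2\) is used precisely once, in the counting step, to cap each in-neighbour's contribution to the edge count entering \(S_{e_j}\).
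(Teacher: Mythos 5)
Your proof is correct and follows essentially the same route as the paper: a pigeonhole count of in-degrees over \(S_{e_j}\) using \(\Delta^{+}(G)\le 2\) to produce a vertex of in-degree one, whose unique incoming edge is then a critical edge with singleton \(X_{e_i}\). Your execution is in fact slightly cleaner: by counting in \(G\) itself (where the tail of \(e_j\) contributes exactly one edge into \(S_{e_j}\), giving the bound \(2\abs{S_{e_j}}-1\)) you avoid the paper's second counting argument, whose only purpose is to rule out the found vertex being the head \(y\) of \(e_j\) --- a case your argument absorbs automatically, since there \(e_i=e_j\) already yields a nonempty \(X_{e_j}\).
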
 
\begin{proof} Consider a digraph \(G\) structurally controllable w.r.t \(\rt\).
Here we will show that if \(X_{e_i}= \emptyset\), then \(S_{e_j}= \emptyset\).

\noindent Suppose \(S_{e_j} \neq \emptyset\). We know that it is obtained by removal of a critical edge \(e_j=(w,y)\), where \(w, y \in V(G)\) and \(y \in S_{e_j}\). Note that the case \(\abs{S_{e_j}}=1\) is trivial as \(\NS(S_{e_j})=0\). Consider \(S_{e_j}\) such that \(\abs{S_{e_j}}\geq 2\).

 \noindent As \(S_{e_j}\) is a minimal set it satisfies the following criterion:
\begin{itemize}[leftmargin=*]
\item \(\abs{S_{e_j}}=\abs{\NS(S_{e_j})} + 1\)
\item  Every vertex in \(\NS(S_{e_j})\) is adjacent to at least two vertices in \(S_{e_j}\).
\end{itemize}
\noindent As \(\Delta^{+}(G) \leq 2\), every vertex in \(\NS(S_{e_j})\) is adjacent to exactly two vertices in \(S_{e_j}\).

\noindent \textit{Claim}: There exists a vertex \(v \in S_{e_j}\) with \(\de(v)=1\) and \(v \neq y\).
\noindent Firstly it is proved that there exists at least one vertex belonging to \(S_{e_j}\) whose in-degree is at most \(1\). Consider the induced subgraph of \(S_{e_j}\) and \(\NS(S_{e_j})\).

\noindent Suppose that every vertex in \(S_{e_j}\) has in-degree at least \(2\), then 
\begin{align*}
\sum_{v \in S_{e_j}}\de(v) \geq  2 \abs{S_{e_j}}
\end{align*} 
\begin{align*}
\sum_{v \in \NS(S_{e_j})}\dep(v) &= 2 \abs{\NS({S_{e_j}})}\\
 & = 2 (\abs{S_{e_j}}-1)\\  & < \sum_{v \in S_{e_j}} \de(v)
\end{align*}
which is a contradiction. Therefore there exists at least one vertex whose in-degree is at most \(1\). For any vertex \(v \in S_{e_j}\), if \(\de(v)= 0\) then it contradicts the minimality of \(S_{e_j}\). This confirms that there exist atleast one vertex, say \(z\) with \(\de(z)= 1\)  

\noindent Secondly we need to prove that \(z \neq y\).
This can also be proved using the same argument as above. Suppose that \(y \in S_{e_j}\) is the only vertex with in-degree \(1\) and rest of vertices in \(S_{e_j}\) has in-degree at least \(2\) in the induced subgraph of \(S_{e_j}\) and \(\NS(S_{e_j})\), then
\begin{align*}
\sum_{v \in S_{e_j}} \de(v) & \geq 1 + 2 (\abs{S_{e_j}}-1)\\ & \geq 1 + 2 \abs{\NS({S_{e_j}})}\\
 &\geq 1 + \sum_{v \in \NS({S_{e_j}})}\dep(v)
\end{align*}
\noindent which is a contradiction. This proves that there exists at least one vertex \(z \neq y \) in \(S_{e_j}\) s.t   \(\de(z)=1\). Therefore the edge \(e_i=(m, z)\) corresponding to \(z\), where \(m \in \NS(S_{e_j})\), is a critical edge as its removal makes \(z\) inaccessible from the root set \(\rt\). Therefore, \(X_{e_i}=\lbrace z \rbrace\) which is non-empty. Contradiction  
\end{proof}
\textit{Proof of Proposition \ref{p:algocom}}. For each subgraph \(G_i\), step \(2a\) deals with finding the critical edges that can be achieved in polynomial time as it is equivalent to analysing structural controllability of \(\abs{E(G_i)}\) digraphs.  It has  complexity of \(\mathcal{O}(\sqrt{V(G_i)}\abs{E(G_i)}^2)\). Step \(2c\) computes the SCCs corresponding to each critical edge in \(G_i\). The SCCs can be obtained by using DAG with complexity \(\mathcal{O}(\abs{V(G_i)}+\abs{E(G_i)})\). Similar steps are followed for each subgraph in the for loop. Hence,  Algorithm \ref{algo 1} has polynomial-time complexity.  

\textit{Proof of  Theorem \ref{l:addnode}}.
We know that every vertex in \(\gra_1\) is input accessible from the root set \(\rtn\). Condition (a) ensures that \(z\) added to the subgraph \(G_1\) is also accessible from  \(\rtn\). Since in-degree of \(z\) is at least \(2\), removal of an edge  terminating at \(z\) does not result in input-inaccessibility w.r.t \(\rtn\).
 
 We will prove condition (b) by analysing the following cases:
 \begin{enumerate}[leftmargin=*] 
 \item Suppose \(z\) has all its in-neighbours contained in \(\NS(S)\). By condition (a), \(z\) has at least two in-neighbours. Since \(S\) is a critical set, \(\abs{S}=\abs{N^{-}(S)}\). Addition of \(z\) creates a creates a new set \(S_{z}\) that satisfies the following criterion:
 \begin{itemize}
 \item \(\abs{S_{z}}= \abs{S} + 1\) as \(z\) is added to \(G_1\).
 \item \(\abs{\NS (S_{z})}= \abs{\NS(S)}\)
\end{itemize}     
This shows that \(\abs{\NS (S_{z})} = \abs{S_{z}}-1\). So, \(G_1^{*}\) has dilation in it. 
\item Suppose \(z\) has only one in-neighbour, say \(k \in V(G_1)\), not contained in \(\NS(S)\). Let \(e=(k,z)\). Suppose \(e\) is removed from \(G_1\). Then all the in-neighbours of \(z\) lie in \(\NS(S)\),  which again lead to dilation after removal of the edge \(e\) from \(G_1\).
\item Suppose \(z\) has at least two in-neighbours. Let \(k\) and \(m\) are the two in-neighbours of \(z\) not contained in \(\NS(S)\). Let \(e_1=(k,z)\) and \(e_2=(m,z)\). Addition of \(z\) creates a creates a new set \(S_{z}\) with \(\abs{\NS(S_z)}\geq \NS(S)+2\). Therefore,  removal of any one edge does not result in dilation. 
\end{enumerate} 
This proves that if \(z\) has at least two in-neighbours not contained in \(\NS(S)\), then it does not introduce dilation in \(G_1^{*}\) after removal of an edge.

\bibliographystyle{siam}
\bibliography{./ref}

\bigskip
\bigskip
\end{document}